\newtheorem{corollary}{Corollary}
\newtheorem{theorem}{Theorem}
\newtheorem{lemma}{Lemma}
\theoremstyle{definition}  
\newcommand{\ii}{\mathrm{i}}  
\newcommand{\SL}{SL(3,\mathbb{Z})} 
\newcommand{\eqt}[1]{\begin{align} #1\end{align}} 
\newcommand{\eqtx}[1]{\begin{align*} #1\end{align*}} 
\newcommand{\mat}[1]{{\begin{bmatrix} #1 \end{bmatrix}}}  
\newcommand{\mround}[1]{\left \lfloor #1 \right \rceil } 
\newcommand{\bff}[1]{\textbf{#1}  } 
\newcommand{\diag}{\textrm{diag}}
\title[PBCs for Uniaxial Flow]{Simple Periodic Boundary Conditions for Molecular Simulation of Uniaxial Flow}
\author{Matthew Dobson and Abdel Kader Geraldo} 
\address{Department of Mathematics, University of Massachusetts Amherst, 
Amherst, MA 01003}
\email{dobson@math.umass.edu}
\date{\today}
\begin{document}
\begin{abstract} 
We present  rotating  periodic boundary conditions  (PBCs) for the simulation of nonequilibrium molecular dynamics (NEMD) under uniaxial stretching flow (USF) or biaxial stretching flow (BSF). Such nonequilibrium flows need specialized PBCs since the simulation box deforms with the background flow.   
The technique builds on previous models using one or  lattice remappings, and is simpler than the PBCs developed for the general three dimensional flow.
For general three  dimensional flows, Dobson \cite{Dobson} and Hunt \cite{Hunt} proposed schemes which are not time-periodic since they use more than one automorphism remapping.  This paper presents
a single automorphism remapping PBCs for USF and BSF which is time periodic up to a rotation matrix and has a better minimum lattice spacing properties.
\end{abstract}

\maketitle 
\section{Introduction} 
Nonequilibrium molecular dynamics (NEMD)~\cite{Evans, todd17} techniques are one tool used to study molecular fluids under steady flow, and for instance, some of recent applications can be found in \cite{Lang-phil,Oconnor,Oconnor2,Nicholson,Oliveira,Nishioka-Akihiro,Templeton,Menzel-Daivis-Todd,Ewen,Li-Zhen}. However, there are special challenges
in formulating the periodic boundary conditions  (PBCs) in
the nonequilibrium setting~\cite{DAIVIS,XuWen,Baranyai,todd1998,todd2000}.

We consider a molecular simulation where the particles have an average flow  consistent 
with a homogeneous background flow matrix 
$A=\nabla \textbf{u} \in \mathbb{R}^{3 \times 3}.$  This flow is used to simulate the
micro-scale motion of a fluid with local strain rate $\nabla \textbf{u}.$ 
We denote the coordinates of the simulation box via three linearly independent vectors
coming from the origin, and we write the vectors as the columns of the matrix 
\eqtx{ 
L_t=\mat{\mathbf{v}_t^1 &\mathbf{v}_t^2 &\mathbf{v}_t^3}\in \mathbb{R}^{3 \times 3}, \qquad t \in [0,\infty).
} 
To be consistent with the
background flow, a particle with a coordinates $(\textbf{q,v} )$, where $\textbf{q}$ is the 
position and $\textbf{v}$ is   the velocity, has images with coordinates 
$(\bff{q}+L_t \bff{n},\bff{ v}+A L_t  \bff{n})$,  where $\bff{n} \in \mathbb{Z}^{3  }$ 
are triples of integers.  Since the image velocity is the time derivative of its position we have
\eqtx{
\frac{d}{dt}(\bff{q}+L_t \bff{n})=\bff{ v}+A L_t  \bff{n},
}
which implies that the simulation box deforms with the flow
\eqtx{
\frac{d}{dt} L_t  = A L_t .
}
If the initial lattice $L_0$ is not chosen carefully,  
the resulting lattice deformation
\eqtx{ 
  L_t  = e^{A t} L_0
}
can become degenerate and lead to a particle and some of its images becoming arbitrarily close. We want to ensure that the  minimum distance between a particle and its images is nonzero
for all time, 
\eqt{\label{min-dist}
d=  \inf_{\begin{array}{c}\scriptstyle \mathbf{n}\in \mathbb{Z}^3 \backslash {0}\\
\scriptstyle  t\in \mathbb{R}\geq 0\end{array}}  ||  {L}_t \mathbf{n}||_2>0.
}  
This is necessary in order to have long-time stable periodic boundary conditions for  NEMD flows. 

We consider a class of PBCs based on remapping the simulation box at various times during the simulation
by choosing a new set of basis vectors for the   lattice $L_t$ that describes the simulation box.  This remapping is 
called a lattice automorphism
and can be represented as a $3 \times 3$ integer matrix with determinant one.  This was
first used for the case of shear flow by Lees-Edwards~\cite{LE}  and was then extended to the case of planar elongational flow
 by   Kraynik and Reinelt (KR)~\cite{KR}. Those  algorithms 
result in remappings which are periodic in time, though KR showed that a 
time-periodic remapping to the original simulation box using such matrices is impossible for general three dimensional flows.  
 Dobson~\cite{Dobson} and  Hunt~\cite{Hunt}   developed  PBCs for general three dimensional diagonalizable flow using similar remapping  technique to the KR scheme. Those schemes use more than one automorphism matrix and result in a remapping that is not time periodic.  In this paper we present a rotating box algorithm applicable to uniaxial stretching flow (USF) and biaxial stretching flow (BSF)  which features advantageous properties. Namely,  we will show that using the class of automorphism matrices that has a pair of complex conjugate eigenvalues, we can construct a single remapping matrix algorithm  which is   time periodic up to a rotation matrix and whose minimum distance~\eqref{min-dist} is larger than those of the GenKR and Hunt algorithms.
 
The outline of this paper is as follows. Section~\ref{sect2} gives the background for PBCs especially shear flow, planar elongational flow, and general three dimensional flows. Section~\ref{sect3} presents the rotating box algorithm, and Section~\ref{sect4} gives the prove that the deformed lattice obtained is not time periodic. Section~\ref{sect5} compares the rotating box algorithm with the existing three dimensional flow PBCs.

\section{Background} 
\label{sect2}
  
In this section, we   give a description of the existing remapping PBCs, starting with the two dimensional flows, especially,   shear  flow and   planar elongational flow. In the case of   three dimensional flows, the generalized KR (GenKR) algorithm developed by Dobson and   Hunt are presented.
All the algorithms follow the same procedure: given a background flow $A$, for each time $t$  we find the appropriate integer power   of the chosen automorphism matrix (or matrices) to remap the lattice $L_t$.
 
\subsection{Shear Flow}
We first consider the shear flow case where the background  matrix $A$ is given by
 \eqtx{
A=\mat{
0 & \epsilon & 0 \\
0 & 0 & 0 \\
0 & 0 & 0
} .
} 
At time $t$, the lattice
is given by
\eqtx{ 
L_t= \mat{1& t \epsilon& 0\\0&1&0\\0&0&1}  L_0 \textrm{ where } L_0= \mat{1&  0& 0\\0&1&0\\0&0&1}.
}
A highly sheared box makes the computation of interparticle interactions
more difficult, however this problem can be overcome by looking at the geometry
of shears that are integer multiples of the box length.
The Lees-Edwards (LE) boundary conditions~\cite{LE}  is used to
prevent  the simulation box from becoming too deformed.  
Whenever the simulation time is an integer multiple of the inverse shear rate, $t_n = n\epsilon^{-1},$
the simulation box is sheared by $n$ box lengths.  We remap the 
simulation box with the matrix  
\eqtx{
M^n=\mat{1& -1& 0\\0&1&0\\0&0&1}^n=\mat{1& -n& 0\\0&1&0\\0&0&1}, \quad n\in \mathbb{Z}
}
such that at a time $t$, the simulation box lattice is  
\eqtx{L_t M^n=\mat{1& t\epsilon-n& 0\\0&1&0\\0&0&1}, n \in \mathbb{Z}. } 
Since $M$ is an integer matrix with determinant equal to one, that is, 
$M \in \SL,$ the matrices $L_t$ and $L_t M^n$ generate the same lattice.
Throughout the simulation,  we choose  $n=-\mround{t \epsilon}$    so
that the stretch is at most half of the simulation box, 
and that this remapping process is time-periodic with period $t^{\ast}=\frac{1}{\epsilon}$, where $\mround{x}$ denote $x$  rounded to nearest integer.
 \subsection{Planar Elongational Flow}\label{PEF}
 Here, the background flow matrix is 
  \eqtx{
A=\mat{
\epsilon &0 & 0 \\
0 & -\epsilon & 0 \\
0 & 0 & 0
} ,
}  meaning that
the simulation box elongates in the $x$ direction and shrinks in the $y$ direction of the standard coordinate plane.   To treat this case, KR proposed the use of a diagonalizable automorphism  matrix  ${M}\in \SL$ that has the form 
\eqtx{
M V=V \Lambda, \quad \Lambda= \mat{
\lambda &0 & 0 \\
0 & \lambda^{-1}  & 0 \\
0 & 0 &1
}  
, \quad \lambda > 0,  \quad \lambda \neq 1,
} 
to remap the simulation box. 
We consider  the initial lattice   $L_0=V^{-1}$  so that  at time $t$ when we  
apply $M^n$ to the 
lattice basis vectors 
\eqtx{
L_{t } M^n= e^{ t A} L_0 M^{n} =  e^{t \epsilon D} \Lambda^n  V^{-1}= e^{ A_t }V^{-1} \textrm{, where }D= \mat{
1 &0 & 0 \\
0 &-1 & 0 \\
0 & 0 &0
},
} 
and $A_t=(t \epsilon+ n\log(\lambda) )D$.
 Letting $n=-\mround{t \frac{ \epsilon}{\log(\lambda)}}$, the  stretch of the flow  $A_t$
   remains bounded during the simulation, and in addition, it is time periodic with period $ t_*=\frac{\log(\lambda)}{\epsilon}.$  For instance, 
\eqtx{
M=\mat{
2 &-1 & 0 \\
-1 & 1 & 0 \\
0 & 0 & 1
} 
}
is an example of matrix which gives a good minimum distance between a particle and its images.
\subsection{General three-dimensional (3D) flow PBCs} 

For a general 3D flow
\eqtx{
A= \mat{
\epsilon_1 &0 & 0 \\
0 &\epsilon_2  & 0 \\
0 & 0 &-\epsilon_1-\epsilon_2
}  ,
}  
Dobson  and Hunt  proposed equivalent algorithms to control the deformation.
\subsubsection{Dobson's Approach}
In \cite{Dobson}, the author develops PBCs which use two commutative automorphism matrices $M_1, M_2 \in \SL$ which have positive eigenvalues for the remapping of the simulation box.  
 Since the matrices are commutative, they are simultaneously diagonalizable, $M_i V= V \Lambda_i.$ An example of the pair of the automorphism matrices are
 \eqtx{
 M_1=\mat{
1 & 1 & 1 \\
1  & 2 & 2 \\
1 & 2 & 3
} \quad \textrm{and }
M_2=\mat{
2 & -2 & 1 \\
-2 & 3 & -1\\
1 & -1 & 1
}.
 }
 The algorithm requires that 
the diagonal of the logarithm of the eigenvalue matrices $ 
 \Hat{\omega}_i = \log (\Lambda_i)$  
must be linearly independent, thus there exists  $\delta_i\in \mathbb{R}$ solving $  A  = \delta_1 \Hat{\omega}_1+\delta_2\Hat{\omega}_2$. 
Now by considering the initial lattice $L_0=V^{-1}$ and picking $n_i=-\mround{t \delta_i}$, we remark that the remapping of the simulation box with $M_1^{n_1} M_2^{n_2}$ results in   the remapped lattice
 \eqtx{ 
 \Tilde{L}_t={L}_t M_1^{n_1} M_2^{n_2} &= e^{A t} L_0 M_1^{n_1} M_2^{n_2} =e^{t A } \Lambda_1^{n_1} \Lambda_2^{n_2} V^{-1} = e^{ A_t  } V^{-1} 
   ,} 
where  the remaining stretch matrix
\eqtx{
A_t= t A + n_1 \Hat{\omega}_1+n_2\Hat{\omega}_2=(t\delta_1-\mround{t\delta_1}) \Hat{\omega}_1+(t\delta_2-\mround{t\delta_2})\Hat{\omega}_2,
 }
 is clearly bounded for every time $t$. Thus the minimum distance of the remapped lattice is bounded away from zero during the entire simulation.

 \subsubsection{Hunt's Approach}
 Hunt's approach is  similar to Dobson's, using the Lenstra-Lenstra-Lov\'{a}sz ( $LLL$)~\cite{LLL}
  in place of a second automorphism matrix. 
 As convention in this paper, we will describe   Hunt's algorithm using    column vectors instead of the row vectors used in the original paper. 
 In fact, Hunt's PBCs consists of remapping the simulation box with the automorphism
 \eqtx{
  {M }=\mat{0&0&1\\1&0&-5\\0&1&6}, 
 \textrm{ where } 
M V=V \Lambda  
 } 
 and choosing the initial lattice basis $L_0=V^{-1}$. After applying  $M^{n}$, the remapped lattice becomes
   \eqtx{ 
 \Tilde{L}_t= e^{tA} L_0 M_1^{n}= e^{tA}  \Lambda_1^{n}V^{-1}=e^{A_t}V^{-1},
 } 
 where $A_t=tA + n \log(\Lambda_1)$. This singe matrix is not enough to control the deformation. The  $LLL$ reduction algorithm~\cite{LLL}  is used to reduce the remapped lattice $\Tilde{L}_t$ by finding a matrix ${M_2}\in \SL$  using a high precision reduction, 
 \eqtx{ 
 \Hat{L_t}=LLL(\Tilde{L}_t)=  e^{A_t} V^{-1} {M_2}. 
 } 
In comparison to the GenKR's approach, such $M_2$ is automatically found on the earlier stage of the method while considering the communicative matrices. On this point, 
 we can improve the Hunt PBCs by finding the commutative matrix $M_2$ manually and apply the GenKR to produce remapped lattice which minimum distance is bounded before  the reduction step.
 The combination of this algorithm is presented in  Algorithms~\ref{GenKRAlg} will be presented later in the paper.  
 
\section{Rotating Box Algorithm}\label{sect3}
In this section, we will develop  PBCs for USF and BSF that are  time periodic up to a rotation. We write  the background flow as
\eqtx{
A=\epsilon D, \quad \textrm{where } D=\mat{
1  &0 & 0 \\
0 &1  & 0 \\
0 & 0 &-2 
}  .
}  
  Here, rather than choosing a pair of matrices $M_i\in \SL$ with real  spectrum, we will use a single matrix  $M\in \SL$ which has a pair of complex conjugate eigenvalues  and use it to remap the simulation box. 

Let us consider $M \in \SL$, a matrix that has a pair of complex conjugate eigenvalues  and write its real Jordan form 
\eqtx{ 
M V=V \Lambda   \textrm{ where } \Lambda  = \mat{\tilde{\eta} &-\tilde{\beta} &0\\\tilde{\beta} &\tilde{\eta} &0\\0&0&(\tilde{\eta} ^2+\tilde{\beta} ^2)^{-1}},  
}
 where $ \tilde{\eta} ,\tilde{\beta} \neq 0$ , and $\tilde{\eta} ^2+\tilde{\beta} ^2\neq 1$ in order to avoid a full rotation.
Taking the logarithm of $\Lambda$, we have
 \eqtx{ 
 \log( \Lambda) %
 =\mat{\eta  &- \beta   & 0 \\
  \beta  & \eta  & 0 \\
0 & 0 & -2\eta  }
   \textup{, where } \eta=\frac{1}{2}\log(\tilde{\eta}^2+\tilde{\beta}^2) ,  \beta  =  \arctan\Big(\frac{ \tilde{\beta}}{\tilde{\eta} }\Big) ,}
which can be decomposed as: 
\eqtx{  
 \log( \Lambda)=\beta  B +\eta D, \textrm{ where  }
B=\begin{bmatrix}
 0 & -1 & 0 \\
1&  0 & 0 \\
0 & 0 & 0 \end{bmatrix} .
}   
For all time $t$, by choosing the initial lattice $L_0=V^{-1}$, we can keep the lattice $ L_t  = e^{A t} L_0$ bounded by 
remapping the simulation box with $M^n$ 
 \eqtx{ 
  \Tilde{L}_t &=  e^{A t} L_0 M^{n} =   e^{\epsilon D t} \Lambda^{n}  V^{-1}  =e^{ n \beta  B }e^{A_t } V^{-1},
  } 
  where $A_t=\Big(\epsilon t- \mround{\frac{t \epsilon}{\eta}}\eta\Big) D$
for  $n=- \mround{\frac{t \epsilon}{\eta}}$. We have already seen in the planar elongational flow  case (Section~\ref{PEF}) that the remapped lattice $e^{A_t}V^{-1} $ is bounded and time-periodic of period  $ t_*=\frac{\eta}{\epsilon}$. 
In this case, the remapped lattice is time-periodic up to the rotation matrix $R=e^{n\beta  B}$. For a forward simulation in time, the algorithm reads 
\begin{algorithm}[H]\small
\caption{ R-KR
\label{CC-KR}}
\begin{algorithmic}
\State $ V,\Lambda =RealJordan(M)$ \Comment{Compute eigenbasis $V$ and the Jordan $\Lambda$ of $M$} 
\State $\eta D=\diag(\log( \Lambda ) )$\Comment{Compute the diagonal part of the logarithm of $\Lambda$}
\State $B=\log( \Lambda ) -\eta D$\Comment{Compute the rotation part of logarithm of $\Lambda$} 
\State $L_0= V^{-1}$\Comment{Compute the initial lattice }
\Statex
\State $\theta=0$ \Comment{Initialize $\theta$}
\For{$i = 1 \dots$ Nsteps}

\State  $n \leftarrow -\mround{ \frac{\theta }{\eta} + \tau}$ \Comment{Compute the power $n$ of $M$ necessary for the remap } 
 \State  $\theta\leftarrow \theta + \tau \eta +\eta n$   \Comment{Compute the remaining stretch value} 
\State  $\Tilde{L} \leftarrow   e^{n \beta  B} e^{ { \theta} D} V^{-1}$            \Comment{Compute the lattice value at the $t$ iteration } 
\EndFor
\end{algorithmic}  
\end{algorithm} 
Since the rotation matrix is bounded, we observe that the remapped lattice is also bounded during all the simulation. In the next section, we show that the rotating algorithm is not time periodic using the fact that the rotation matrix is never equal to the identity matrix for any automorphism chosen.
\section{Non time periodicity of the lattice in the rotating box algorithm  }\label{sect4} 

As mentioned above, for the class of automorphism matrices with real eigenvalues, it has been shown in \cite{KR} that it is impossible to construct KR PBCs with a time periodic lattice for USF or BSF. 
In this section, we will extend this demonstration to the class of automorphism matrices which have complex conjugate eigenvalues. Namely, we   show in the following corollary that although  the rotation  algorithm applied to USF or BSF is time-periodic up to a rotation matrix, there is no  choice of $M\in \SL$ where the period of the remapping   aligns with that of the rotation. In other word, we show that  rotation matrix $ e^{n \beta  B}$ is not equal to the identity matrix for $n\neq 0$, or $\beta$ is not a equal of $\pi$ times a rational number
\eqtx{ 
 \beta  =\tan^{-1}\Big(\frac{\tilde{\beta}}{\tilde{\eta}}\Big)\neq2 \pi\frac{ m}{n}, \quad n,m\in \mathbb{Z}, \quad n\neq 0,
}  
for any $M$ consider in Section~\ref{sect3}, i.e with complex eigenvalues one of the eigenvalue of  $M$ is not equal to  $1$.

We start by reminding that $\tilde{\eta}\pm \ii \tilde{\beta}, (\tilde{\eta}^{2}+\tilde{\beta}^{2})^{-1} $ are the roots of the  characteristic polynomial  $P(\lambda)=\lambda^3-h \lambda^2+k\lambda-1,h,k\in \mathbb{Z}$ of $M\in \SL$, and write these roots in the polar coordinate as $r^{-2}=(\tilde{\eta}^{2}+\tilde{\beta}^{2})^{-1} $, $re^{\pm\beta  }=\tilde{\eta}\pm \ii \tilde{\beta}$.
Let us first show the following lemma:
\begin{lemma}\label{pf}
   A matrix $M\in \SL$ with complex eigenvalues as define above  has $\beta  = 2 \pi\frac{ m}{n}$ if it has at least an  eigenvalue equal to $1$. 
\end{lemma}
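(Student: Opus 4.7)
The plan is to show that when one eigenvalue of $M$ is $1$, the pair of complex conjugate eigenvalues must lie on the unit circle, and then use the integrality of the characteristic polynomial to force $\beta$ into a short explicit list of angles, each rational with respect to $\pi$.

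First, I would note that since $\tilde{\beta}\neq 0$, the two complex eigenvalues $\tilde{\eta}\pm \ii\tilde{\beta}$ are non-real, so the only candidate for a real eigenvalue equal to $1$ is the third root $(\tilde{\eta}^{2}+\tilde{\beta}^{2})^{-1}$. Setting this equal to $1$ immediately gives $r^{2}=\tilde{\eta}^{2}+\tilde{\beta}^{2}=1$, so the conjugate pair becomes $e^{\pm\ii\beta}$ with $\beta=\arctan(\tilde{\beta}/\tilde{\eta})$, i.e.\ the non-trivial eigenvalues lie on the unit circle.

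Next I would exploit the integer structure of the characteristic polynomial $P(\lambda)=\lambda^{3}-h\lambda^{2}+k\lambda-1$ with $h,k\in\mathbb{Z}$. Factoring out the known root $\lambda=1$ yields
\[
P(\lambda)=(\lambda-1)\bigl(\lambda^{2}-2\tilde{\eta}\lambda+1\bigr),
\]
and matching coefficients gives $h=2\tilde{\eta}+1$, hence $2\tilde{\eta}\in\mathbb{Z}$. Combined with the constraint $\tilde{\eta}^{2}+\tilde{\beta}^{2}=1$ and $\tilde{\beta}\neq 0$, we get $|\tilde{\eta}|<1$, so $2\tilde{\eta}\in\{-1,0,1\}$.

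Finally, I would enumerate the three cases: $\tilde{\eta}=-\tfrac{1}{2}$ yields $\tilde{\beta}=\tfrac{\sqrt{3}}{2}$ and $\beta=\tfrac{2\pi}{3}$; $\tilde{\eta}=0$ yields $\tilde{\beta}=1$ and $\beta=\tfrac{\pi}{2}$; $\tilde{\eta}=\tfrac{1}{2}$ yields $\tilde{\beta}=\tfrac{\sqrt{3}}{2}$ and $\beta=\tfrac{\pi}{3}$. In each case $\beta$ is a rational multiple of $2\pi$, establishing the claim. The argument has essentially no obstacles, the only subtlety being to notice that the eigenvalue equal to $1$ must be the real one (so that the conjugate pair sits on the unit circle); once this is in hand, Kronecker's theorem on algebraic integers of modulus one would give the conclusion abstractly, but the direct case-check above is elementary and self-contained.
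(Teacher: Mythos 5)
Your proof is internally correct, but it establishes the \emph{converse} of what the paper actually proves and needs, and this is worth flagging clearly. The lemma as literally worded says ``eigenvalue $1$ $\Rightarrow$ $\beta$ is a rational multiple of $2\pi$,'' and that is exactly what you prove, cleanly and much more elementarily than the paper: factor out $(\lambda-1)$, note the quadratic $\lambda^2-2\tilde\eta\lambda+1$ has integer coefficients, so $2\tilde\eta\in\{-1,0,1\}$, and enumerate. However, the paper's own proof begins ``Let us assume that $\beta=\frac{2m\pi}{n}$,'' invokes the Niven degree bound on $\tan\frac{2m\pi}{n}$ and the Artin splitting-field theorem to constrain $n$, and concludes that the characteristic polynomial must have $1$ as a root — i.e.\ it proves ``$\beta=2\pi m/n$ $\Rightarrow$ eigenvalue $1$.'' That converse implication is also the one the subsequent corollary uses (``only matrices with an eigenvalue equal to one have a rotational part that is a root of unity''); your direction, while true, does not by itself rule out a matrix \emph{without} eigenvalue $1$ having $\beta$ a rational multiple of $2\pi$, which is what the non-periodicity argument requires. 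So the lemma's ``if'' appears to be an ``only if'' (or an ``iff'') in intent, and your argument handles only the easy half. It is worth noting, though, that your observation about Kronecker's theorem points at a possible shortcut for the hard direction too: if one could show directly that a rational-multiple-of-$2\pi$ angle forces the conjugate pair onto the unit circle (hence $r=1$, hence the third eigenvalue is $r^{-2}=1$), the Niven/Artin machinery could be bypassed — but that step is precisely what needs an argument, and your proof assumes it rather than derives it.
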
 
The proof of Lemma~\ref{pf} requires the use of the following results.
Let us consider  $\varphi$, the Euler totient function where $\varphi(n)$ is the number of positive integers that are relatively prime to $n$. A scalar $\alpha$ is said to be  algebraic over a field $K$ if there exists elements $a_0,\dots, a_i, (i\geq 1)$ of $K$, not equal to zero, such that 
\eqtx{
\alpha_0+ \alpha a_1+ \dots+\alpha^i  a_i=0,
}
and   $\deg\{\alpha\}$ is the degree of the irreducible characteristic polynomial. We refer the reader to  \cite[Chapter 4]{lang} or any introduction to Algebra book for the background about the definitions used in this section.
Then we have:
\begin{theorem}{\cite[Theorem 3.11]{Niven2}}\label{nivo} 
For $n>4$ and $\gcd(m,n)=1$, 
\eqtx{\deg\Big\{ \tan \frac{2 m \pi}{n} \Big\}=
\begin{cases}
\varphi(n) &\textrm{ for }\gcd(n,8)<4, \\
\frac{\varphi(n)}{2}   &\textrm{ for }  \gcd(n,8)=4, \\
\frac{\varphi(n)}{4}   &\textrm{ for } \gcd(n,8)>4.
\end{cases}
} 
\end{theorem}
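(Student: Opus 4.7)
The plan is to prove the implication by showing that $\beta = 2m\pi/n$, combined with the assumption that $M \in \SL$ has strictly complex (non-real) conjugate eigenvalues, forces the real eigenvalue $\gamma = r^{-2}$ to equal $1$. The two main tools are Theorem~\ref{nivo} and an upper bound on $\deg_{\mathbb{Q}}(\tan\beta)$ extracted from the cubic characteristic polynomial $P(\lambda)=\lambda^3-h\lambda^2+k\lambda-1$.

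First I would split on reducibility of $P$ over $\mathbb{Q}$. Because complex conjugate roots must live in the same irreducible factor, any reducibility of $P$ forces a rational linear factor $\lambda-\gamma$; the rational root theorem applied to a monic integer polynomial gives $\gamma\in\{\pm1\}$, and positivity of $\gamma=r^{-2}$ then yields $\gamma=1$, giving the desired eigenvalue immediately. I therefore assume henceforth that $P$ is irreducible, so $[\mathbb{Q}(\gamma):\mathbb{Q}]=3$.

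Next I would bound $\deg_{\mathbb{Q}}(\tan\beta)$. Using the trace identity $\gamma+2\tilde{\eta}=h$ together with $\tilde{\eta}^2+\tilde{\beta}^2=1/\gamma$, a short computation yields
\[
\tan^2\beta \;=\; \frac{\tilde{\beta}^2}{\tilde{\eta}^2} \;=\; \frac{4}{\gamma(h-\gamma)^2}-1 \;\in\; \mathbb{Q}(\gamma),
\]
so $\deg_{\mathbb{Q}}(\tan^2\beta)\leq 3$, and therefore $\deg_{\mathbb{Q}}(\tan\beta)\leq 6$ (passing from $\tan^2\beta$ to $\tan\beta$ extends the field by at most a quadratic).

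Now suppose $\beta=2m\pi/n$ with $\gcd(m,n)=1$. The small cases $n\leq 4$ are handled by hand: $n\in\{1,2\}$ gives $\beta\in\{0,\pi\}$, contradicting strict complex conjugacy; while for $n\in\{3,4\}$ the explicit value of $\cos\beta$ pins the characteristic polynomial down to finitely many integer choices of $(h,k)$, each of which is easily seen to force $\gamma=1$. For $n>4$, Theorem~\ref{nivo} gives $\deg(\tan(2m\pi/n))\in\{\varphi(n),\varphi(n)/2,\varphi(n)/4\}$, and combining with $\deg_{\mathbb{Q}}(\tan\beta)\leq 6$ yields $\varphi(n)\leq 24$. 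This restricts $n$ to a finite (though not tiny) list.

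The main obstacle is the case-by-case verification for the finite list of admissible $n>4$: for each such $n$ I would substitute the explicit value of $\tan^2(2m\pi/n)$ into the identity $\tan^2\beta=4/(\gamma(h-\gamma)^2)-1$, impose integrality of $k$ through the second elementary symmetric relation $\tilde{\eta}^2+\tilde{\beta}^2+2\tilde{\eta}\gamma=k$, and conclude either that no integer $(h,k)$ exists (vacuous) or that every solution forces $\gamma=1$. This enumeration is mechanical but tedious, and represents the technical bulk of the argument; the slicker structural steps above just reduce the lemma to this finite check.
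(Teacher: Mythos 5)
Your proposal does not address the stated result. Theorem~\ref{nivo} is Niven's theorem giving the exact degree of $\tan(2m\pi/n)$ over $\mathbb{Q}$ in terms of $\varphi(n)$; the paper cites it from \cite{Niven2} and does not (and need not) prove it. Your argument instead \emph{uses} Theorem~\ref{nivo} as a black box and sets out to establish a different statement, namely Lemma~\ref{pf}: that $\beta = 2m\pi/n$ forces $M$ to have the eigenvalue $1$. A proof of Theorem~\ref{nivo} itself would require identifying the minimal polynomial of $\tan(2m\pi/n)$ through the Galois theory of cyclotomic fields, and nothing of that kind appears in your sketch.

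Read instead as a blind attempt at Lemma~\ref{pf}, your plan is sound and tracks the paper's overall strategy (bound $\deg_{\mathbb{Q}}(\tan\beta)$ by $6$, invoke Theorem~\ref{nivo} to shrink $n$ to a finite list, then check), but you reach the degree bound by a genuinely different and arguably cleaner route. The paper establishes $\deg(\tan\beta)\le 6$ indirectly via Lemma~\ref{degree}, which appeals to Theorem~\ref{art} on the degree of the splitting field of an irreducible cubic. You instead compute from the symmetric relations $\gamma+2\tilde\eta=h$ and $\tilde\eta^2+\tilde\beta^2=1/\gamma$ that $\tan^2\beta = 4/(\gamma(h-\gamma)^2)-1 \in \mathbb{Q}(\gamma)$, hence $\deg(\tan^2\beta)\le 3$ and $\deg(\tan\beta)\le 6$, with no splitting-field machinery. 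This is more explicit than the paper's version and in fact sidesteps a subtle point in Lemma~\ref{degree}: it is $\tan^2\beta$, not $\tan\beta$, that visibly lies in the splitting field, since only $\tilde\eta$ and $\mathrm{i}\tilde\beta$ (not $\tilde\beta$ itself) are manifestly there. Your handling of the reducible case by the rational root theorem is also a clean addition the paper does not spell out. The remaining gap is that you defer the terminal finite enumeration as ``mechanical but tedious''; the paper actually carries it out (Table~\ref{tab} together with the closed forms for $h,k$ in terms of $r$ and $\cos(2\pi m/n)$), and a complete write-up would need to do the same.
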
 
\begin{theorem}{\cite[Theorem 16.8.5]{artin}}\label{art}
For $K$ the splitting field of an irreducible cubic polynomial $P$ over a field $\mathbb{Q}$ and $D_P$ the discriminant of $P$,
\begin{itemize}
    \item If $D_P$ is a square in $\mathbb{Q}$, the degree of the extension field $K$ over $\mathbb{Q}$ is three 
        \item If $D_P$ is not a square in $\mathbb{Q}$, the degree of the extension field $K$ over $\mathbb{Q}$ is six. 
\end{itemize}  
\end{theorem}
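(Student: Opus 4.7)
The plan is to prove Theorem 16.8.5 by classifying the Galois group of $K/\mathbb{Q}$ as a transitive subgroup of the symmetric group on the three roots, and then linking the two possibilities to the squareness of $D_P$ through the element $\delta=\sqrt{D_P}\in K$.

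First I would bound the degree $[K:\mathbb{Q}]$ from both sides. Since $P$ is irreducible of degree three over the characteristic-zero field $\mathbb{Q}$, it is separable with three distinct roots $\alpha_1,\alpha_2,\alpha_3\in K$. Adjoining one root gives $[\mathbb{Q}(\alpha_1):\mathbb{Q}]=3$, and over $\mathbb{Q}(\alpha_1)$ the polynomial factors as $P(x)=(x-\alpha_1)Q(x)$ with $Q$ a quadratic, so $[K:\mathbb{Q}(\alpha_1)]\in\{1,2\}$. By the tower law, $[K:\mathbb{Q}]\in\{3,6\}$.

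Next I would identify $G=\mathrm{Gal}(K/\mathbb{Q})$ with a transitive subgroup of $S_3$. Because $K$ is generated over $\mathbb{Q}$ by the three roots, the action of $G$ on $\{\alpha_1,\alpha_2,\alpha_3\}$ is faithful, giving an embedding $G\hookrightarrow S_3$. Irreducibility of $P$ forces $G$ to act transitively (any two roots are conjugate via a $\mathbb{Q}$-isomorphism between $\mathbb{Q}(\alpha_i)$ and $\mathbb{Q}(\alpha_j)$ that extends to $K$). The only transitive subgroups of $S_3$ are $A_3$ of order three and $S_3$ itself, matching exactly the two possible degrees from the previous step.

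The key step is then to detect which of these two cases occurs using $\delta=(\alpha_1-\alpha_2)(\alpha_1-\alpha_3)(\alpha_2-\alpha_3)\in K$. A direct computation shows $\delta^2=D_P$, and inspecting the action of $S_3$ on $\delta$ shows that every $3$-cycle fixes $\delta$ while every transposition sends $\delta\mapsto-\delta$. Consequently $\delta$ is $G$-fixed if and only if $G\subseteq A_3$. By the fundamental theorem of Galois theory the fixed field $K^G$ equals $\mathbb{Q}$, so $\delta\in\mathbb{Q}$ if and only if $G=A_3$. Since separability gives $\delta\neq 0$, the element $\delta$ is a genuine square root of $D_P$ in $K$, and $\delta\in\mathbb{Q}$ is equivalent to $D_P$ being a square in $\mathbb{Q}$. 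Combining this with the degree count yields $[K:\mathbb{Q}]=|A_3|=3$ when $D_P$ is a square and $[K:\mathbb{Q}]=|S_3|=6$ otherwise.

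The main conceptual obstacle is showing that the Galois action actually realizes every transposition needed to negate $\delta$: the argument only requires that \emph{some} element of $G\setminus A_3$ exists when $G=S_3$, which is automatic, but it relies on first knowing $G$ is a subgroup of $S_3$ (faithfulness) and on $\delta$ being nonzero (separability). Once separability and the transitive-subgroup classification are in place, the rest is a routine application of the Galois correspondence.
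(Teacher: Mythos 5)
The paper does not give a proof of Theorem~\ref{art}; it is a result cited directly from Artin's \emph{Algebra}, and your argument essentially reproduces the standard textbook proof found there. Your reasoning is correct and complete: the tower law bounds $[K:\mathbb{Q}]\in\{3,6\}$, faithfulness and transitivity identify $\mathrm{Gal}(K/\mathbb{Q})$ with $A_3$ or $S_3$, and the sign of the Galois action on $\delta=(\alpha_1-\alpha_2)(\alpha_1-\alpha_3)(\alpha_2-\alpha_3)$, combined with $\delta^2=D_P$, separability ($\delta\neq 0$), and the Galois correspondence, shows $\delta\in\mathbb{Q}$ exactly when the group is $A_3$, which is exactly when $D_P$ is a square in $\mathbb{Q}$.
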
 
 We determine the degree of the  algebraic integer $\tan \beta  $  in the following lemma:
\begin{lemma}\label{degree}
   $\tan \beta  =\frac{\tilde{\beta}}{\tilde{\eta}}$ is an algebraic integer of degree at most six.
\end{lemma}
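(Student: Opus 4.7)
The core observation is that $\tan^2\beta$ can be expressed in terms of the single real eigenvalue $\lambda_3=(\tilde{\eta}^2+\tilde{\beta}^2)^{-1}$ of $M$. Since $\lambda_3$ is a root of the monic integer cubic $P(\lambda)=\lambda^3-h\lambda^2+k\lambda-1$, it is algebraic over $\mathbb{Q}$ with $[\mathbb{Q}(\lambda_3):\mathbb{Q}]\le 3$. If I can realize $\tan^2\beta$ as an element of $\mathbb{Q}(\lambda_3)$, then $[\mathbb{Q}(\tan^2\beta):\mathbb{Q}]\le 3$, and taking a square root gives $[\mathbb{Q}(\tan\beta):\mathbb{Q}]\le 6$ by the tower law, which is the claimed bound.

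To produce the required expression I would simply apply Vieta's formulas to $P$. With $\lambda_1=\tilde{\eta}+\ii\tilde{\beta}$ and $\lambda_2=\tilde{\eta}-\ii\tilde{\beta}$, the trace relation yields $2\tilde{\eta}=\lambda_1+\lambda_2=h-\lambda_3$, while the product relation gives $\tilde{\eta}^2+\tilde{\beta}^2=\lambda_1\lambda_2=\lambda_3^{-1}$. Subtracting $\tilde{\eta}^2=(h-\lambda_3)^2/4$ from $\lambda_3^{-1}$ and dividing by $\tilde{\eta}^2$ produces
\[
\tan^2\beta=\frac{\tilde{\beta}^2}{\tilde{\eta}^2}=\frac{4}{\lambda_3(h-\lambda_3)^2}-1,
\]
which is manifestly in $\mathbb{Q}(\lambda_3)$. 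Clearing denominators and reducing higher powers of $\lambda_3$ modulo $P$ then yields a polynomial relation of degree at most~$3$ in $\tan^2\beta$ with integer coefficients; equivalently, one can exhibit it as the resultant $\mathrm{Res}_{\lambda_3}\bigl(P(\lambda_3),\ \lambda_3(h-\lambda_3)^2(\tan^2\beta+1)-4\bigr)$, which eliminates $\lambda_3$ and gives a polynomial in $\tan^2\beta$ of degree~$3$ over $\mathbb{Z}$, hence a polynomial of degree at most~$6$ for $\tan\beta$.

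The only technical points to watch are that the two denominators appearing in the formula are nonzero: $\lambda_3\ne 0$ is automatic from $\det M=1$, and $h-\lambda_3=2\tilde{\eta}\ne 0$ is built into the standing hypothesis $\tilde{\eta}\ne 0$ of Section~\ref{sect3}. Beyond these bookkeeping checks I do not foresee a substantive obstacle; once the Vieta-based rational expression for $\tan^2\beta$ is in hand, the remainder is a one-line degree count using the tower $\mathbb{Q}\subseteq\mathbb{Q}(\tan^2\beta)\subseteq\mathbb{Q}(\tan\beta)$ with the second extension of degree at most~$2$.
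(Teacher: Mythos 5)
Your proof is correct, and it takes a genuinely different route from the paper. The paper's argument passes through the splitting field $K$ of $P$ over $\mathbb{Q}$, invokes the Artin discriminant criterion (Theorem~\ref{art}) to bound $[K:\mathbb{Q}]$ by six, and then asserts that $\tilde{\eta},\tilde{\beta}\in K$ so that $\tilde{\beta}/\tilde{\eta}\in K$. Your version instead uses Vieta's formulas to write $\tan^2\beta = 4/\bigl(\lambda_3(h-\lambda_3)^2\bigr)-1$ as an explicit element of $\mathbb{Q}(\lambda_3)$, a field of degree at most three over $\mathbb{Q}$ since $\lambda_3$ is a root of the monic integer cubic $P$; the tower $\mathbb{Q}\subseteq\mathbb{Q}(\tan^2\beta)\subseteq\mathbb{Q}(\tan\beta)$ then gives the bound six. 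This is more elementary (no appeal to Galois-theoretic facts about cubic splitting fields and their discriminants) and also sidesteps a point the paper glosses over: only $i\tilde{\beta}$, not necessarily $\tilde{\beta}$ itself, is forced to lie in the splitting field, so the paper's claim that $\tilde{\beta}\in K$ requires an extra word, whereas your manipulation works directly with the real quantity $\tan^2\beta$. Your checks that $\lambda_3\neq 0$ (from $\det M=1$) and $h-\lambda_3=2\tilde{\eta}\neq 0$ are exactly the right nondegeneracy conditions to flag. The only small caveat, shared with the paper, is that the argument as written establishes that $\tan\beta$ is algebraic of degree at most six rather than specifically an algebraic \emph{integer}, but that distinction is not used downstream.
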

\begin{proof}
Since $\tilde{\eta}$ and $\tilde{\beta}$ are elements of the splitting field  $ K=\mathbb{Q}(r,e^{\beta  })$ of the irreducible polynomial $ P$, we have that  $\frac{\tilde{\beta}}{\tilde{\eta}}$ is also an element of $K$. By Theorem~\ref{art}, $K$ has a degree as most six in $\mathbb{Q}$ and so does  $\frac{\tilde{\beta}}{\tilde{\eta}}$. 
\end{proof} 
Let us prove  Lemma~\ref{pf} by finding all coefficients  $k,h\in \mathbb{Z}^+$ of the characteristic polynomial of  $M$ for which $\beta  =\frac{2 m \pi}{n}, m, n \in \mathbb{Z}.$ 
\begin{proof}  
Let us assume that
$\beta  =\frac{2 m \pi}{n}, m, n \in \mathbb{Z},$ and  find the possible $n,m$ by using  Theorem~\ref{nivo} and
the Theorem~\ref{art} which guarantee that $\tan \frac{2 m \pi}{n}$ is an algebraic integer of degree at most six.  
 Thus using \cite{Mendelsohn}, we find all $n$ that satisfy the following 
\eqtx{ 
\varphi(n) \leq 6  &\textrm{ for }\gcd(n,8)<4, \\
\frac{\varphi(n)}{2} \leq 6 &\textrm{ for }  \gcd(n,8)=4, \\
\frac{\varphi(n)}{4} \leq 6  &\textrm{ for } \gcd(n,8)>4,
}  
and report all $n$ and $\deg\Big\{\tan \frac{2 m \pi}{n}\Big\}\leq 6$ in   Table~\ref{tab}.
\begin{table}[h]
    \centering 
\begin{tabular}{r|l} 
$\deg\Big\{\tan \frac{2 m \pi}{n}\Big\}$ &$n$ \\ \hline
1 &1, 2   \\
2 &3, 6, 12, 16, 24 \\ 
4 &5, 10, 20, 32, 40, 48  \\
6 &7, 9, 14, 18, 28, 36, 56, 72 \\
\end{tabular} 
    \caption{Table of $n$ and $\deg\Big\{\tan \frac{2 m \pi}{n}\Big\}\leq 6$}
    \label{tab}
\end{table}
Then after few computing we find that
      \eqtx{ 
  h&=\frac{1+2r^3\cos{2 \pi \frac{m}{n}}}{r^2}, 
  k=\frac{r^3+2\cos{2\pi\frac{m}{n}}}{r},
  }  
  and plugging  in $n$ from the Table~\ref{tab} and  $m$ such that $\gcd(m,n)=1$,
  we remark that $k,h\in\mathbb{Z}^+$ if $n=1,2$. In result, $P$ has at least one eigenvalue  equal to   $1$, since $P(\lambda)=\lambda^3- \lambda^2+\lambda-1$, or $P(\lambda)=\lambda^3-3 \lambda^2+3\lambda-1$ for the latter values of $n$.
\end{proof} 
In sum, we derive the main  corollary of this section:
\begin{corollary}
  The rotating box algorithm cannot give a time-periodic simulation box for any choice of integer commutative complex conjugate matrix.
\end{corollary}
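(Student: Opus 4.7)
The plan is a direct reduction to Lemma~\ref{pf}, using the explicit formula for the remapped lattice derived in Section~\ref{sect3}. Recall that after remapping the simulation box reads
\eqtx{
\tilde L_t = e^{n(t)\beta B}\,e^{A_t}\,V^{-1},\qquad n(t)=-\mround{t\epsilon/\eta},
}
where the factor $e^{A_t}V^{-1}$ is already known to be bounded and periodic with period $t_*=\eta/\epsilon$, while the rotation prefactor $e^{n(t)\beta B}$ is the sole obstruction to genuine time periodicity.

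First I would make precise what time periodicity of $\tilde L_t$ would require. Since $A_t$ is $t_*$-periodic and $n(t)$ advances by $-1$ every increment of $t_*$, the full map $\tilde L_t$ is $T$-periodic (for $T=k t_*$ with some positive integer $k$) if and only if $e^{-k\beta B}=I$. Because $B$ is the standard infinitesimal generator of rotations in the $(x,y)$-plane, $e^{-k\beta B}=I$ is equivalent to $k\beta\in 2\pi\mathbb{Z}$, i.e.\ to $\beta$ being a rational multiple of $2\pi$.

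Next I would invoke Lemma~\ref{pf}: any $M\in\SL$ with a pair of complex conjugate eigenvalues and with $\beta=2\pi m/n$ must have an eigenvalue equal to $1$. But the rotating box algorithm explicitly restricts attention to matrices $M\in\SL$ whose eigenvalues are $\tilde\eta\pm \ii\tilde\beta$ and $(\tilde\eta^2+\tilde\beta^2)^{-1}$ with $\tilde\eta,\tilde\beta\neq 0$ and $\tilde\eta^2+\tilde\beta^2\neq 1$; the first condition excludes $\tilde\eta\pm \ii\tilde\beta=1$ (which would force $\tilde\beta=0$), and the second excludes the real eigenvalue $(\tilde\eta^2+\tilde\beta^2)^{-1}=1$. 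Hence no admissible $M$ has $1$ as an eigenvalue, giving the desired contradiction.

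I do not expect a real obstacle here, since all the heavy lifting is already done in Lemma~\ref{pf}; the only careful point is making sure that time periodicity really reduces to $e^{-k\beta B}=I$ (equivalently $\beta\in 2\pi\mathbb{Q}$), and that the standing assumptions $\tilde\eta,\tilde\beta\neq 0$ and $\tilde\eta^2+\tilde\beta^2\neq 1$ are precisely what rule out the eigenvalue $1$. The conclusion then follows: no integer automorphism with complex conjugate eigenvalues admissible for the rotating box algorithm can yield a time-periodic simulation box.
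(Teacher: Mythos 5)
Your proposal is correct and takes essentially the same route as the paper: both reduce time periodicity to $\beta \in 2\pi\mathbb{Q}$ and then invoke Lemma~\ref{pf} to force an eigenvalue equal to $1$. Your exclusion of that case via the standing hypotheses $\tilde\eta,\tilde\beta\neq 0$ and $\tilde\eta^2+\tilde\beta^2\neq 1$ is simply a more formal phrasing of the paper's closing remark that such an $M$ would be a pure rotation and hence useless for controlling the background stretch.
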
 
\begin{proof}
Using Lemma~\ref{pf},  we know that only matrices with an eigenvalue equal to one 1 have a rotational part that is a root of unity.  
However, those matrices are themselves pure rotations and have no use for the PBCs since they cannot control the stretching caused by the underlying background flow.   
\end{proof}

\section{Comparison of the three dimensional algorithms } 
\label{sect5}
In this section, we  compute  the minimum distance of the particle images for our algorithm and compare it with the   GenKR using, Hunt's and Dobson's automorphism matrices. 

To compute the minimum distance between a particle and its images when the  rotating box PBCs is applied, we propose the matrix
\eqtx{
M&=\mat{0&-2&1\\1&1&0\\0&1&0}
,}  
which has  a pair of complex conjugate eigenvalues with positive real part.  
Then, the initial lattice is given by
\eqtx{
L_0= \mat{0.7726&0&-0.2083\\-0.26086&0.43442&0.48424\\-0.35555&-0.14106&0.84978},
} 
which implies that, given the standard lattice with the coordinate $(x,y,z)$, the $xy$ plane is rotated counterclockwise by approximately $113$ degrees, and $xz$ by $ 111 $ degrees. 

Moreover for the GenKR   algorithm, we keep the automorphism matrices and the initial lattice given in the original paper. The commutative matrices $M_i\in \SL$ and associated with the initial orthonormal lattice basis $L_0$ which determinant is equal to one, are given by   
\eqtx{
M_1=\mat{
1 & 1 & 1 \\
1  & 2 & 2 \\
1 & 2 & 3
}, 
M_2=\mat{
2 & -2 & 1 \\
-2 & 3 & -1\\
1 & -1 & 1
}\textrm{and, }L_0=\mat{0.59101&-0.73698&0.32799\\0.73698&0.32799&-0.59101\\0.32799&0.59101&0.73698}.
}   
For Hunt's formulation, we find a second automorphism matrix which   has positive eigenvalues and is commutative with the matrix given in the original paper.
The commutative matrices $M_1,M_2$ and the normalized initial lattice $L_0$   respectively read
 \eqtx{
 {M_1}=\mat{0&0&1\\1&0&-5\\0&1&6}, 
{M_2}=\mat{3&1&1\\-5&-2&-4\\1&1&4} 
\textrm{and, }L_0=\mat{0.52276&2.6394&13.3259\\0.52276&0.33619&0.2162\\0.52276&0.161&0.049584}.
}
\begin{algorithm}[H]\small
\caption{ GenKR-Hunt
\label{GenKRAlg}}
\begin{algorithmic} 
\State $ \Lambda_i=L_0 M_i L_0^{-1} $ \Comment{Diagonalization of $M_i$} 
\State $\hat{\omega}_i=\diag(\log(\Lambda_i))  $\Comment{Compute the logarithm of the diagonal of $\Lambda_i$ }
\State $A=\delta_1 \Hat{\omega}_1+ \delta_2 \Hat{\omega}_2    $\Comment{Compute $\delta_i$ } 
\Statex
\State $\theta_i=0$ \Comment{Initialize $\theta_i$}
\For{$i = 1 \dots$ Nsteps}   
  \State  $\theta_i\leftarrow \theta_i + \delta_k \tau t_* $   \Comment{Update the time} 
    \State  $\theta_i\leftarrow \theta_i -\mround{\theta_i} $   \Comment{Find the decimal part of $\theta_i$} 
 \State  $A_i\leftarrow \theta_1\Hat{\omega}_1 + \theta_2\Hat{\omega}_2 $   \Comment{Compute the remain stretch vector} 
\State  $\Tilde{L_i} \leftarrow    e^{ \diag(A_i)}L_0$            \Comment{Compute the lattice value at the $i$ iteration }
\State $\Hat{L_i} \leftarrow LLL(\Tilde{L}_i)$  \Comment{Reduce the remapped lattice with the $LLL$ reduction algorithm } 
\EndFor
\end{algorithmic}  
\end{algorithm}

    \begin{figure}
    \centering 
    {{\includegraphics[width=14cm]{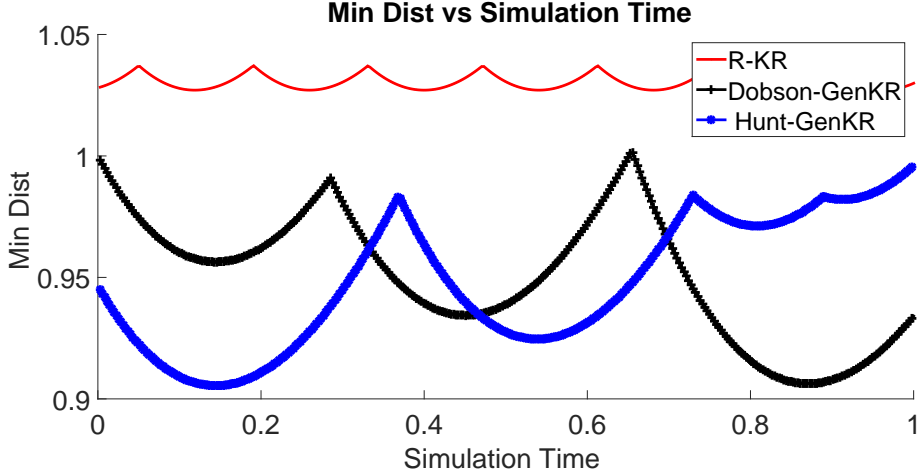} }} 
    \caption{Minimum distance vs simulation time for our new algorithm R-KR (red triangle), Dobson-GenKR (black plus) and, Hunt-GenKR  (blue stars). The minimum distance in our case is periodic and better than in the other cases. 
    }\label{compare11}
\end{figure}

Then we graph the  minimum distance for the three dimensional algorithms  in figure~\ref{compare11}, when  the stretch is $\epsilon=1$.  
We can observe in the graph that the  minimum distance curve in rotating box PBCs case presents a clear pattern of periodicity. In addition, the minimum distance for all the simulation for the rotating box algorithm is approximately 1.0271 compare to 0.9054 in GenKR's case.   
\section{Conclusion}\label{sect6} 
Kraynik-Reinelt proved that it is impossible to find  time periodic PBCs for general three dimensional flow, using $\SL$ matrices with real eigenvalues. In this paper, we show that by using an $\SL$ matrix with complex   eigenvalues, we can create an algorithm that is time-periodic   up to a rotation matrix for USF and BSF. Although we show that the rotations never align, the regularity of the remapping make this algorithm more straightforward than the existing ones. These PBCs also  offer a better minimum distance between a particle and its images.


\begin{thebibliography}{10}

\bibitem{artin}
M.~Artin.
\newblock {\em Algebra}.
\newblock Pearson Prentice Hall, 2011.

\bibitem{Baranyai}
Andr\'as Baranyai and Peter~T. Cummings.
\newblock Steady state simulation of planar elongation flow by nonequilibrium
  molecular dynamics.
\newblock {\em The Journal of Chemical Physics}, 110(1):42--45, 1999.

\bibitem{DAIVIS}
P.J. Daivis, M.L. Matin, and B.D. Todd.
\newblock Nonlinear shear and elongational rheology of model polymer melts by
  non-equilibrium molecular dynamics.
\newblock {\em Journal of Non-Newtonian Fluid Mechanics}, 111(1):1 -- 18, 2003.

\bibitem{Dobson}
Matthew Dobson.
\newblock Periodic boundary conditions for long-time nonequilibrium molecular
  dynamics simulations of incompressible flows.
\newblock {\em The Journal of Chemical Physics}, 141(18):184103, 2014.

\bibitem{Evans}
Denis~J. Evans and Gary~P. Morriss.
\newblock {\em Statistical mechanics of nonequilibrium liquids}.
\newblock ANU E Press, Canberra, 2007.

\bibitem{Ewen}
James Ewen, D.~Heyes, and Daniele Dini.
\newblock Advances in nonequilibrium molecular dynamics simulations of
  lubricants and additives.
\newblock {\em Friction}, 6, 02 2018.

\bibitem{Hunt}
Thomas~A. Hunt.
\newblock Periodic boundary conditions for the simulation of uniaxial
  extensional flow of arbitrary duration.
\newblock {\em Molecular Simulation}, 42(5):347--352, 2016.

\bibitem{KR}
A.M. Kraynik and D.A. Reinelt.
\newblock Extensional motions of spatially periodic lattices.
\newblock {\em Int. J. Multiphase Flow}, 18(6):1045 -- 1059, 1992.

\bibitem{Lang-phil}
Philipp~S. Lang, Benedikt Obermayer, and Erwin Frey.
\newblock Dynamics of a semiflexible polymer or polymer ring in shear flow.
\newblock {\em Phys. Rev. E}, 89:022606, Feb 2014.

\bibitem{lang}
Serge Lang.
\newblock {\em Algebra}.
\newblock Springer, New York, NY, 2002.

\bibitem{LE}
A~W Lees and S~F Edwards.
\newblock The computer study of transport processes under extreme conditions.
\newblock {\em J. Phys. C Solid State}, 5(15):1921, 1972.

\bibitem{LLL}
Arjen Lenstra, H.~Lenstra, and László Lovász.
\newblock Factoring polynomials with rational coefficients.
\newblock {\em Mathematische Annalen}, 261, 12 1982.

\bibitem{Li-Zhen}
Zhen Li, Shiyun Xiong, Charles Sievers, Yue Hu, Zheyong Fan, Ning Wei, Hua Bao,
  Shunda Chen, Davide Donadio, and Tapio Ala-Nissila.
\newblock Influence of thermostatting on nonequilibrium molecular dynamics
  simulations of heat conduction in solids.
\newblock {\em The Journal of Chemical Physics}, 151(23):234105, 2019.

\bibitem{Mendelsohn}
N.~S. Mendelsohn.
\newblock The equation $\phi(x) = k$.
\newblock {\em Mathematics Magazine}, 49(1):37--39, 1976.

\bibitem{Menzel-Daivis-Todd}
A.~G. Menzel, P.~J. Daivis, and B.~D. Todd.
\newblock Equilibrium and nonequilibrium molecular dynamics methods to compute
  the first normal stress coefficient of a model polymer solution.
\newblock {\em Phys. Rev. Fluids}, 5:084201, Aug 2020.

\bibitem{Nicholson}
David~A. Nicholson and Gregory~C. Rutledge.
\newblock Molecular simulation of flow-enhanced nucleation in n-eicosane melts
  under steady shear and uniaxial extension.
\newblock {\em The Journal of Chemical Physics}, 145(24):244903, 2016.

\bibitem{Nishioka-Akihiro}
Akihiro Nishioka, Tatsuhiro Takahashi, Yuichi Masubuchi, Jun-ichi Takimoto, and
  Kiyohito Koyama.
\newblock Description of uniaxial, biaxial, and planar elongational viscosities
  of polystyrene melt by the k-bkz model.
\newblock {\em Journal of Non-Newtonian Fluid Mechanics}, 89:287--301, 03 2000.

\bibitem{Niven2}
Ivan Niven.
\newblock {\em Irrational Numbers}, volume~11.
\newblock Mathematical Association of America, 1 edition, 1985.

\bibitem{Oconnor2}
Thomas~C. O'Connor, Nicolas~J. Alvarez, and Mark~O. Robbins.
\newblock Relating chain conformations to extensional stress in entangled
  polymer melts.
\newblock {\em Phys. Rev. Lett.}, 121:047801, Jul 2018.

\bibitem{Oconnor}
Thomas~C. O'Connor, Ting Ge, Michael Rubinstein, and Gary~S. Grest.
\newblock Topological linking drives anomalous thickening of ring polymers in
  weak extensional flows.
\newblock {\em Phys. Rev. Lett.}, 124:027801, Jan 2020.

\bibitem{Oliveira}
AnaSofia Oliveira, Giovanni Ciccotti, Shozeb Haider, and Adrian Mulholland.
\newblock Dynamical nonequilibrium molecular dynamics reveals the structural
  basis for allostery and signal propagation in biomolecular systems.
\newblock {\em The European Physical Journal B}, 94:144, 07 2021.

\bibitem{Templeton}
Clark Templeton, R.~Elber, Mauro Ferrario, and Giovanni Ciccotti.
\newblock A new boundary driven nemd scheme for heat and particle diffusion in
  binary mixtures.
\newblock {\em Molecular Physics}, page e1892849, 03 2021.

\bibitem{todd1998}
B.~D. Todd and Peter~J. Daivis.
\newblock Nonequilibrium molecular dynamics simulations of planar elongational
  flow with spatially and temporally periodic boundary conditions.
\newblock {\em Phys. Rev. Lett.}, 81:1118--1121, Aug 1998.

\bibitem{todd2000}
B.~D. Todd and Peter~J. Daivis.
\newblock The stability of nonequilibrium molecular dynamics simulations of
  elongational flows.
\newblock {\em The Journal of Chemical Physics}, 112(1):40--46, 2000.

\bibitem{todd17}
Billy~D. Todd and Peter~J. Daivis.
\newblock {\em Nonequilibrium Molecular Dynamics: Theory, Algorithms and
  Applications}.
\newblock Cambridge University Press, 2017.

\bibitem{XuWen}
Wen-Sheng Xu, Jan-Michael~Y. Carrillo, Christopher~N. Lam, Bobby~G. Sumpter,
  and Yangyang Wang.
\newblock Molecular dynamics investigation of the relaxation mechanism of
  entangled polymers after a large step deformation.
\newblock {\em ACS Macro Letters}, 7(2):190--195, 2018.

\end{thebibliography}
\end{document}